\documentclass[a4paper,11pt]{article}
\usepackage{jheppub} % for details on the use of the package, please see the JINST-author-manual
\usepackage{amsthm}
\usepackage{lineno}

\newtheorem{theorem}{Theorem}

\title{Scale-Invariant Non-Collapse from Raychaudhuri Comparison}

\author{Rohit Dhormare}
\affiliation{Dr. Babasaheb Ambedkar Marathwada University,\\
India}

\emailAdd{dhormaretheoreticalphysics.proton.me}

\abstract{We establish a scale-invariant spatial non-collapse estimate for
hypersurface-orthogonal timelike geodesic congruences. The result extends a
previous Raychaudhuri-based construction in which a comparison estimate for
the expansion was used to control the volume of domains transported along
the congruence. The additional ingredient here is a quantitative estimate of
the spatial distance distortion induced by the expansion and shear. Assuming
that the initial spatial geometry is non-collapsed up to a scale $r_0$, and
that the expansion, shear, and timelike Ricci curvature satisfy suitable
uniform bounds, we obtain, before the first blow-up time of the Raychaudhuri
comparison solution,
\begin{equation*}
\operatorname{Vol}_{h_\tau}\bigl(B_{h_\tau}(p_\tau,r)\bigr) \geq \kappa_{\mathrm{L}}(\tau)\, r^3,
\end{equation*}
where
\begin{equation*}
\kappa_{\mathrm{L}}(\tau) = \kappa_0 e^{-3K\tau} \exp\!\left(\int_0^\tau \Theta(s)\,ds\right), \qquad K = \frac{\Theta_0}{3} + \Sigma_0 .
\end{equation*}
Here $\Theta$ denotes the solution of the scalar Raychaudhuri comparison
equation and $K$ controls the spatial metric distortion. For every finite
time strictly before the comparison barrier, $\kappa_{\mathrm{L}}(\tau)$ is
positive. When the comparison solution has a finite blow-up time, the
explicit comparison coefficient approaches zero as that time is approached
from below. The comparison barrier is a boundary of the estimate and is not,
by itself, a statement that the physical congruence develops a focal point
or spacetime singularity. The result is structurally analogous to
scale-invariant non-collapse estimates in geometric flow, but is
mathematically distinct from Perelman's $\kappa$-noncollapsing theorem for
Ricci flow.}

\keywords{Raychaudhuri equation, geodesic congruences, gravitational
focusing, non-collapse, Lorentzian geometry, volume comparison, geometric
flow}

\begin{document}
\maketitle
\flushbottom

\section{Introduction}
\label{sec:intro}

Non-collapse estimates play an important role in geometric analysis. A
central example is Perelman's $\kappa$-noncollapsing theorem for Ricci flow,
in which scale-invariant lower bounds on the volume of geodesic balls
provide quantitative control against geometric degeneration
\cite{Ham,Per1,Per2}.

A corresponding structure in Lorentzian geometry is suggested by the
Raychaudhuri equation. The equation governs the expansion of geodesic
congruences and plays a central role in the classical theory of
gravitational focusing and singularity formation \cite{Ray,HE,Wald,KarSen}.
For a timelike congruence, the expansion directly determines the evolution
of spatial volume elements orthogonal to the congruence.

In our previous work, a Raychaudhuri-based non-collapse construction was
obtained from a comparison estimate for the expansion \cite{Dhormare2026}.
The resulting estimate controls the volume of a domain transported by the
geodesic flow.

The purpose of the present Letter is to establish the next geometric step: a
scale-invariant lower bound for the volume of spatial geodesic balls. This
distinction is essential. A lower bound for the volume of a transported
domain does not, by itself, imply a lower bound of the form
\begin{equation*}
\frac{\operatorname{Vol}(B(p,r))}{r^3} \geq \kappa > 0
\end{equation*}
for geodesic balls at later times. The reason is that the spatial metric
evolves along the congruence, and therefore distances between points need
not be preserved by the flow.

We overcome this issue by combining the Raychaudhuri volume estimate with a
quantitative bound on spatial distance distortion. The basic logical
structure is
\begin{equation}
\begin{gathered}
\text{Raychaudhuri comparison} \\
\Downarrow \\
\text{transported-volume control} \\
\Downarrow \\
\text{distance distortion control} \\
\Downarrow \\
\text{spatial scale-invariant non-collapse.}
\end{gathered}
\label{eq:1}
\end{equation}

The resulting theorem is related in spirit to non-collapse estimates in
geometric flow, but it is not a Lorentzian formulation of Perelman's
theorem. In particular, the evolving spatial metric here does not satisfy
Ricci flow. Instead, the principal evolution is generated by the kinematics
of a timelike geodesic congruence through the Raychaudhuri equation.

\section{Geometric setup}
\label{sec:setup}

Let
\begin{equation*}
(M^4, g)
\end{equation*}
be a four-dimensional Lorentzian spacetime with signature $(-,+,+,+)$.

Let $u^\mu$ be a future-directed unit timelike vector field satisfying
\begin{equation}
u^\mu u_\mu = -1, \qquad u^\nu \nabla_\nu u^\mu = 0.
\label{eq:2}
\end{equation}
Thus the integral curves of $u^\mu$ are timelike geodesics.

We assume that the congruence is hypersurface orthogonal,
\begin{equation}
\omega_{\mu\nu} = 0,
\label{eq:3}
\end{equation}
so that locally the congruence defines a family of spacelike hypersurfaces
$\Sigma_\tau$.

The induced spatial metric is
\begin{equation}
h_{\mu\nu} = g_{\mu\nu} + u_\mu u_\nu.
\label{eq:4}
\end{equation}

The deformation tensor of the congruence is
\begin{equation}
B_{\mu\nu} = h_\mu{}^\alpha h_\nu{}^\beta \nabla_\alpha u_\beta.
\label{eq:5}
\end{equation}

Since the vorticity vanishes, its decomposition is
\begin{equation}
B_{\mu\nu} = \frac{1}{3}\theta h_{\mu\nu} + \sigma_{\mu\nu},
\label{eq:6}
\end{equation}
where
\begin{equation}
\theta = \nabla_\mu u^\mu
\label{eq:7}
\end{equation}
is the expansion and
\begin{equation}
\sigma^2 \equiv \sigma_{\mu\nu}\sigma^{\mu\nu} \geq 0
\label{eq:8}
\end{equation}
is the squared shear norm.

The factor $1/3$ reflects the fact that the spatial hypersurfaces are
three-dimensional.

\section{Raychaudhuri comparison}
\label{sec:comparison}

For the hypersurface-orthogonal timelike geodesic congruence, the
Raychaudhuri equation is
\begin{equation}
\dot\theta = -\frac{1}{3}\theta^2 - \sigma^2 - R_{\mu\nu}u^\mu u^\nu,
\label{eq:9}
\end{equation}
where
\begin{equation}
\dot\theta = u^\mu \nabla_\mu \theta.
\label{eq:10}
\end{equation}

Suppose that, in the spacetime region under consideration,
\begin{equation}
R_{\mu\nu}u^\mu u^\nu \leq R_0
\label{eq:11}
\end{equation}
and
\begin{equation}
\sigma^2 \leq \Sigma_0^2.
\label{eq:12}
\end{equation}

Define
\begin{equation}
A = R_0 + \Sigma_0^2.
\label{eq:13}
\end{equation}

Then eq.~\eqref{eq:9} implies
\begin{equation}
\dot\theta \geq -\frac{1}{3}\theta^2 - A.
\label{eq:14}
\end{equation}

For each geodesic, let $\Theta(\tau)$ denote the solution of the scalar
comparison equation
\begin{equation}
\dot\Theta = -\frac{1}{3}\Theta^2 - A, \qquad \Theta(0) = \theta_0,
\label{eq:15}
\end{equation}
where $\theta_0$ is the initial expansion along the geodesic under
consideration.

The scalar comparison principle gives
\begin{equation}
\theta(\tau) \geq \Theta(\tau)
\label{eq:16}
\end{equation}
on every interval on which both solutions exist.\footnote{The comparison
statement is understood pointwise along each integral curve of $u^\mu$. If
the initial expansion $\theta_0$ varies over $\Sigma_0$, then the comparison
solution $\Theta(\tau,p)$ and the corresponding comparison barrier are
likewise understood geodesic-by-geodesic. A single global comparison time
requires a uniform choice of initial and curvature bounds.}

For the case
\begin{equation}
A > 0,
\label{eq:17}
\end{equation}
define
\begin{equation}
\lambda = \sqrt{\frac{A}{3}}.
\label{eq:18}
\end{equation}

The solution can be written as
\begin{equation}
\Theta(\tau) = -\sqrt{3A}\,\tan(\lambda\tau + \alpha),
\label{eq:19}
\end{equation}
where
\begin{equation}
\tan\alpha = -\frac{\theta_0}{\sqrt{3A}}.
\label{eq:20}
\end{equation}

The first positive blow-up time of this comparison solution is
\begin{equation}
\tau_c = \sqrt{\frac{3}{A}}\left(\frac{\pi}{2} - \alpha\right).
\label{eq:21}
\end{equation}
\footnote{The quantity $\tau_c$ is the first blow-up time of the auxiliary
comparison solution. It is therefore a comparison barrier, not automatically
a focal time, conjugate-point time, or curvature singularity of the physical
spacetime. The actual expansion $\theta(\tau)$ may remain finite beyond
$\tau_c$ because the inequality $\theta \geq \Theta$ ceases to provide a
useful finite lower comparison after the auxiliary solution diverges.}

For $A \leq 0$, the comparison equation is still well defined, but its
explicit form is different. Since the main theorem below does not require
an explicit trigonometric expression, we formulate the theorem in terms of
the maximal interval on which $\Theta$ remains finite.

\section{Volume evolution}
\label{sec:volume}

The induced metric evolves along the geodesic flow according to
\begin{equation}
\mathcal{L}_u h_{\mu\nu} = 2B_{\mu\nu} = \frac{2}{3}\theta h_{\mu\nu} + 2\sigma_{\mu\nu}.
\label{eq:22}
\end{equation}

Taking the trace gives the evolution of the spatial volume element:
\begin{equation}
\frac{d}{d\tau} d\mu_\tau = \theta\, d\mu_\tau.
\label{eq:23}
\end{equation}

Let
\begin{equation*}
F_\tau : \Sigma_0 \longrightarrow \Sigma_\tau
\end{equation*}
denote the local flow map generated by $u^\mu$. More precisely, the volume
relation should be understood after pulling the later volume form back to
$\Sigma_0$:
\begin{equation}
F_\tau^* d\mu_\tau = \exp\!\left(\int_0^\tau \theta(s)\,ds\right) d\mu_0.
\label{eq:24}
\end{equation}

Using eq.~\eqref{eq:16}, one obtains
\begin{equation}
F_\tau^* d\mu_\tau \geq \exp\!\left(\int_0^\tau \Theta(s)\,ds\right) d\mu_0.
\label{eq:25}
\end{equation}

Consequently, for every measurable domain $D_0 \subset \Sigma_0$ on which
the flow is defined,
\begin{equation}
\operatorname{Vol}_\tau(F_\tau(D_0)) \geq \exp\!\left(\int_0^\tau \Theta(s)\,ds\right) \operatorname{Vol}_0(D_0).
\label{eq:26}
\end{equation}

This is the transported-domain estimate that underlies the present
construction.

\section{Spatial distance distortion}
\label{sec:distortion}

The transported-volume estimate does not by itself give a scale-invariant
estimate for spatial geodesic balls because
\begin{equation}
F_\tau(B_0(p,r)) \neq B_\tau(F_\tau(p), r)
\label{eq:27}
\end{equation}
in general.

We therefore control the deformation of the spatial metric. Assume that on
the region swept out by the flow,
\begin{equation}
|\theta| \leq \Theta_0, \qquad |\sigma|_h \leq \Sigma_0,
\label{eq:28}
\end{equation}
where
\begin{equation}
|\sigma|_h = \sqrt{\sigma_{\mu\nu}\sigma^{\mu\nu}}.
\label{eq:29}
\end{equation}

Define
\begin{equation}
K = \frac{\Theta_0}{3} + \Sigma_0.
\label{eq:30}
\end{equation}

For every spatial vector $X$, eq.~\eqref{eq:22} gives
\begin{equation}
|B(X,X)| \leq \left(\frac{|\theta|}{3} + |\sigma|_h\right) h(X,X) \leq K\, h(X,X).
\label{eq:31}
\end{equation}

Hence
\begin{equation}
-2K h \leq \mathcal{L}_u h \leq 2K h
\label{eq:32}
\end{equation}
as quadratic forms.

If $X(\tau)$ is obtained by pushing forward an initial spatial vector $X_0$
under the flow, then
\begin{equation}
\left|\frac{d}{d\tau}\log |X(\tau)|_{h_\tau}\right| \leq K.
\label{eq:33}
\end{equation}

Gronwall's inequality therefore gives
\begin{equation}
e^{-K\tau} |X_0|_{h_0} \leq |dF_\tau(X_0)|_{h_\tau} \leq e^{K\tau} |X_0|_{h_0}.
\label{eq:34}
\end{equation}

Let $d_0$ and $d_\tau$ denote the Riemannian distance functions associated
with $h_0$ and $h_\tau$, respectively. Applying eq.~\eqref{eq:34} to spatial
curves and taking the infimum over such curves yields
\begin{equation}
e^{-K\tau} d_0(p,q) \leq d_\tau(F_\tau p, F_\tau q) \leq e^{K\tau} d_0(p,q).
\label{eq:35}
\end{equation}

In particular,
\begin{equation}
d_0(p,q) \leq e^{K\tau} r \implies d_\tau(p_\tau, q_\tau) \leq r.
\label{eq:36}
\end{equation}

Therefore,
\begin{equation}
F_\tau\bigl[B_0(p, e^{-K\tau} r)\bigr] \subset B_\tau(p_\tau, r).
\label{eq:37}
\end{equation}
\footnote{The distance estimate is a local statement on the region where
the flow map is defined and the bounds in eq.~\eqref{eq:28} hold. No global
completeness assumption is required for the local inclusion of balls.}

\section{Main theorem}
\label{sec:theorem}

\begin{theorem}[Raychaudhuri-induced spatial non-collapse]
Let $(M^4, g)$ admit a smooth hypersurface-orthogonal timelike geodesic
congruence with spacelike hypersurfaces $\Sigma_\tau$. Suppose that the
initial spatial geometry satisfies the scale-invariant lower volume bound
\begin{equation}
\operatorname{Vol}_0(B_0(p,r)) \geq \kappa_0 r^3, \qquad 0 < r \leq r_0,
\label{eq:38}
\end{equation}
for some $\kappa_0 > 0$.

Assume that along the region swept out by the congruence,
\begin{equation}
R_{\mu\nu}u^\mu u^\nu \leq R_0, \qquad \sigma^2 \leq \Sigma_0^2,
\label{eq:39}
\end{equation}
and
\begin{equation}
|\theta| \leq \Theta_0, \qquad |\sigma|_h \leq \Sigma_0.
\label{eq:40}
\end{equation}

Let $\Theta$ solve
\begin{equation}
\dot\Theta = -\frac{1}{3}\Theta^2 - (R_0 + \Sigma_0^2), \qquad \Theta(0) = \theta_0,
\label{eq:41}
\end{equation}
and let $\tau_c$ denote the first finite blow-up time of the comparison
solution, if such a time exists.

Define
\begin{equation}
K = \frac{\Theta_0}{3} + \Sigma_0.
\label{eq:42}
\end{equation}

Then, for every
\begin{equation}
0 \leq \tau < \tau_c
\label{eq:43}
\end{equation}
and every radius satisfying
\begin{equation}
0 < r \leq r_0 e^{K\tau},
\label{eq:44}
\end{equation}
one has
\begin{equation}
\operatorname{Vol}_\tau(B_\tau(p_\tau, r)) \geq \kappa_{\mathrm{L}}(\tau) r^3,
\label{eq:45}
\end{equation}
where
\begin{equation}
\kappa_{\mathrm{L}}(\tau) = \kappa_0 e^{-3K\tau} \exp\!\left[\int_0^\tau \Theta(s)\,ds\right].
\label{eq:46}
\end{equation}

In particular,
\begin{equation}
\kappa_{\mathrm{L}}(\tau) > 0
\label{eq:47}
\end{equation}
for every finite $\tau < \tau_c$.
\end{theorem}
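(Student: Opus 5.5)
The plan is to chain the transported-volume estimate \eqref{eq:26} with the ball inclusion \eqref{eq:37}, and then apply the initial non-collapse hypothesis \eqref{eq:38} at the shrunken radius. Fix $\tau\in[0,\tau_c)$, a point $p\in\Sigma_0$ with image $p_\tau=F_\tau(p)$, and a radius $r$ satisfying \eqref{eq:44}. Set $r'=e^{-K\tau}r$.

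First, by the distance distortion bound \eqref{eq:35}, which rests on \eqref{eq:40} and Gronwall via \eqref{eq:34}, we have the inclusion \eqref{eq:37}:
\begin{equation*}
F_\tau\bigl[B_0(p,r')\bigr]\subset B_\tau(p_\tau,r).
\end{equation*}
Monotonicity of $\operatorname{Vol}_\tau$ then gives $\operatorname{Vol}_\tau(B_\tau(p_\tau,r))\ge \operatorname{Vol}_\tau(F_\tau(B_0(p,r')))$.

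Second, on $[0,\tau]$ the comparison solution $\Theta$ of \eqref{eq:41} is finite and continuous, because $\tau<\tau_c$. The hypotheses \eqref{eq:39} give \eqref{eq:14}, and the comparison \eqref{eq:16} holds along every geodesic issuing from $B_0(p,r')$. Applying \eqref{eq:26} with $D_0=B_0(p,r')$ yields
\begin{equation*}
\operatorname{Vol}_\tau(F_\tau(B_0(p,r')))\ge \exp\!\Bigl(\int_0^\tau\Theta\Bigr)\operatorname{Vol}_0(B_0(p,r')).
\end{equation*}
Third, we use \eqref{eq:38} at radius $r'$, which gives $\operatorname{Vol}_0(B_0(p,r'))\ge\kappa_0 r'^3=\kappa_0e^{-3K\tau}r^3$. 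Combining the three inequalities gives \eqref{eq:45} with $\kappa_{\mathrm L}$ as in \eqref{eq:46}. Positivity \eqref{eq:47} follows because the integral of a continuous function on a compact interval is finite, so the exponential is strictly positive.

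The main obstacle is the admissible radius range. Using \eqref{eq:38} requires $r'\le r_0$, that is, $r\le r_0e^{K\tau}$, which is exactly \eqref{eq:44}. This part is fine, but it also silently requires $F_\tau$ to be defined and injective on $B_0(p,r')$, with the bounds \eqref{eq:40} holding on the swept region; both are covered by the standing local assumptions (see the footnote to \eqref{eq:37}). A subtler point is that \eqref{eq:35} uses curves in $\Sigma_0$ pushed forward. To get the inclusion one only needs the \emph{upper} bound $d_\tau(F_\tau p,F_\tau q)\le e^{K\tau}d_0(p,q)$, which is valid because any $h_0$-curve from $p$ to $q$ maps to an $h_\tau$-curve of length at most $e^{K\tau}$ times its original length. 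I would state this direction explicitly rather than relying on the two-sided \eqref{eq:35}, since the lower bound would need surjectivity of the flow onto curves in $\Sigma_\tau$.

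A second subtlety is that $\theta_0$, and hence $\Theta$, may vary from geodesic to geodesic. I would either assume a uniform initial value, as in the footnote to \eqref{eq:16}, or replace $\int_0^\tau\Theta$ by its infimum over $B_0(p,r')$ when integrating \eqref{eq:25}.
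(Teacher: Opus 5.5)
Your proposal is correct and follows the paper's proof essentially step for step: you shrink the radius to $e^{-K\tau}r$, then combine the ball inclusion \eqref{eq:37}, the transported-volume bound \eqref{eq:26}, and the initial non-collapse hypothesis at the shrunken radius, and you get positivity from finiteness of $\Theta$ on $[0,\tau]$. Your two refinements are sound additions rather than departures. One is that only the upper bound $d_\tau \le e^{K\tau} d_0$ is needed, which is the correct reading of \eqref{eq:36}, whose hypothesis should be $d_0(p,q)\le e^{-K\tau}r$. The other is that $\theta_0$ must be taken uniform, or minimized over the initial ball.
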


\section{Proof}
\label{sec:proof}

\begin{proof}
Fix $p \in \Sigma_0$, $\tau < \tau_c$, and a radius $r$ satisfying
eq.~\eqref{eq:44}. Define
\begin{equation}
\rho = e^{-K\tau} r.
\label{eq:48}
\end{equation}

The radius restriction gives
\begin{equation}
\rho \leq r_0.
\label{eq:49}
\end{equation}

Therefore, the initial non-collapse assumption implies
\begin{equation}
\operatorname{Vol}_0(B_0(p,\rho)) \geq \kappa_0 \rho^3 = \kappa_0 e^{-3K\tau} r^3.
\label{eq:50}
\end{equation}

By the distance distortion estimate, eq.~\eqref{eq:37},
\begin{equation}
F_\tau(B_0(p,\rho)) \subset B_\tau(p_\tau, r).
\label{eq:51}
\end{equation}

Consequently,
\begin{align}
\operatorname{Vol}_\tau(B_\tau(p_\tau, r)) &\geq \operatorname{Vol}_\tau(F_\tau(B_0(p,\rho))) \notag \\
&\geq \exp\!\left(\int_0^\tau \Theta(s)\,ds\right) \operatorname{Vol}_0(B_0(p,\rho)).
\label{eq:52}
\end{align}

Using eq.~\eqref{eq:50}, we obtain
\begin{equation}
\operatorname{Vol}_\tau(B_\tau(p_\tau, r)) \geq \kappa_0 e^{-3K\tau} \exp\!\left(\int_0^\tau \Theta(s)\,ds\right) r^3.
\label{eq:53}
\end{equation}

This is precisely eq.~\eqref{eq:45} with
\begin{equation*}
\kappa_{\mathrm{L}}(\tau) = \kappa_0 e^{-3K\tau} \exp\!\left(\int_0^\tau \Theta(s)\,ds\right).
\end{equation*}

Since $\Theta$ is finite for every $\tau < \tau_c$, the exponential factor
is strictly positive. Hence $\kappa_{\mathrm{L}}(\tau) > 0$.
\end{proof}

\section{Explicit comparison coefficient}
\label{sec:explicit}

For $A > 0$, the comparison solution is given by eq.~\eqref{eq:19}.
Integrating gives
\begin{align}
\int_0^\tau \Theta(s)\,ds &= -\sqrt{3A} \int_0^\tau \tan(\lambda s + \alpha)\,ds \notag \\
&= 3\ln\!\left[\frac{\cos(\lambda\tau + \alpha)}{\cos\alpha}\right].
\label{eq:54}
\end{align}

Thus the non-collapse coefficient becomes
\begin{equation}
\kappa_{\mathrm{L}}(\tau) = \kappa_0 e^{-3K\tau} \left[\frac{\cos(\lambda\tau + \alpha)}{\cos\alpha}\right]^3.
\label{eq:55}
\end{equation}

For $0 \leq \tau < \tau_c$, the cosine ratio is positive, and therefore
$\kappa_{\mathrm{L}}(\tau) > 0$.

As $\tau \to \tau_c^-$, one has $\cos(\lambda\tau + \alpha) \to 0$, so that
\begin{equation}
\kappa_{\mathrm{L}}(\tau) \to 0.
\label{eq:56}
\end{equation}

This limiting behavior concerns the comparison coefficient. It should not
be interpreted as proving that the physical geodesic balls necessarily
collapse at $\tau_c$. Rather, the comparison estimate loses its positive
quantitative lower bound as the auxiliary comparison solution diverges.

\section{Scale-invariant interpretation}
\label{sec:scale-invariant}

Define the spatial volume ratio
\begin{equation}
V_{\mathrm{L}}(p,\tau,r) = \frac{\operatorname{Vol}_\tau(B_\tau(p,r))}{r^3}.
\label{eq:57}
\end{equation}

The theorem gives
\begin{equation}
V_{\mathrm{L}}(p,\tau,r) \geq \kappa_{\mathrm{L}}(\tau) > 0
\label{eq:58}
\end{equation}
for all admissible radii before the comparison barrier.

The quantity $V_{\mathrm{L}}$ is dimensionless. Thus the theorem is
genuinely scale-invariant in the spatial dimension three.

The estimate can be decomposed as
\begin{equation}
\kappa_{\mathrm{L}}(\tau) = \underbrace{\kappa_0}_{\text{initial non-collapse}} \cdot \underbrace{e^{-3K\tau}}_{\text{distance distortion}} \cdot \underbrace{e^{\int_0^\tau \Theta(s)\,ds}}_{\text{Raychaudhuri volume evolution}}.
\label{eq:59}
\end{equation}

This decomposition is useful because the three factors have different
geometric origins. The first is an initial spatial assumption, the second
measures the cost of converting transported balls into geodesic balls, and
the third is the dynamical volume contribution coming from Raychaudhuri
comparison.

\section{Relation to geometric-flow non-collapse}
\label{sec:relation}

The present result has a structural resemblance to scale-invariant
non-collapse estimates appearing in geometric analysis \cite{Ham,Per1,Per2}.
Nevertheless, it should not be identified with Perelman's
$\kappa$-noncollapsing theorem.

Perelman's theorem concerns Riemannian Ricci flow,
\begin{equation}
\partial_t g = -2\operatorname{Ric},
\label{eq:60}
\end{equation}
and is deeply connected with the entropy and monotonicity structure of
Ricci flow \cite{Per1,Per2}.

In the present setting, the spatial metric satisfies instead
\begin{equation}
\mathcal{L}_u h = \frac{2}{3}\theta h + 2\sigma,
\label{eq:61}
\end{equation}
while the scalar expansion satisfies
\begin{equation}
\dot\theta = -\frac{1}{3}\theta^2 - \sigma^2 - R_{\mu\nu}u^\mu u^\nu.
\label{eq:62}
\end{equation}

Thus the underlying geometric evolutions are different.

The analogy lies specifically in the appearance of a dimensionless lower
volume ratio:
\begin{equation}
\frac{\operatorname{Vol}(B_r)}{r^3}.
\label{eq:63}
\end{equation}

Accordingly, the present result should be described as a
Raychaudhuri-induced Lorentzian spatial non-collapse estimate, rather than
as a Lorentzian version of Perelman's theorem.

\section{Connection with the previous Raychaudhuri construction}
\label{sec:connection}

The previous construction established a comparison estimate of the form
\begin{equation}
\theta \geq \Theta
\label{eq:64}
\end{equation}
and consequently the transported-domain inequality
\begin{equation}
\operatorname{Vol}_\tau(F_\tau(D)) \geq e^{\int_0^\tau \Theta\,ds} \operatorname{Vol}_0(D).
\label{eq:65}
\end{equation}

The present work adds the spatial metric estimate
\begin{equation}
e^{-K\tau} d_0 \leq d_\tau \leq e^{K\tau} d_0.
\label{eq:66}
\end{equation}

This allows one to choose an initial ball of radius $e^{-K\tau} r$ whose
image lies inside the radius-$r$ geodesic ball at time $\tau$.

The resulting logical structure is therefore
\begin{equation}
\text{transported-volume estimate} + \text{distance distortion estimate} \implies \text{spatial scale-invariant non-collapse.}
\label{eq:67}
\end{equation}

The present result is thus a geometric extension of the previous
Raychaudhuri-based volume construction rather than an independent
replacement of it.

\section{Special cases}
\label{sec:special}

\subsection{Minkowski spacetime}
\label{sec:minkowski}

Consider a static timelike congruence in Minkowski spacetime. Then
\begin{equation}
R_{\mu\nu} = 0, \qquad \theta = 0, \qquad \sigma_{\mu\nu} = 0.
\label{eq:68}
\end{equation}

Consequently,
\begin{equation}
K = 0
\label{eq:69}
\end{equation}
and
\begin{equation}
\kappa_{\mathrm{L}}(\tau) = \kappa_0.
\label{eq:70}
\end{equation}

Thus the spatial volume ratio is preserved, as expected.

\subsection{Shear-free congruence}
\label{sec:shear-free}

If
\begin{equation}
\sigma_{\mu\nu} = 0,
\label{eq:71}
\end{equation}
then
\begin{equation}
K = \frac{\Theta_0}{3}.
\label{eq:72}
\end{equation}

The non-collapse coefficient becomes
\begin{equation}
\kappa_{\mathrm{L}}(\tau) = \kappa_0 e^{-\Theta_0 \tau} \exp\!\left(\int_0^\tau \Theta(s)\,ds\right).
\label{eq:73}
\end{equation}

For an isotropic spatial evolution
\begin{equation}
h_\tau = a^2(\tau) h_0,
\label{eq:74}
\end{equation}
the expansion is
\begin{equation}
\theta = 3\frac{\dot a}{a},
\label{eq:75}
\end{equation}
and therefore
\begin{equation}
d\mu_\tau = a^3(\tau)\, d\mu_0,
\label{eq:76}
\end{equation}
which is consistent with the general volume evolution equation.

\section{Discussion}
\label{sec:discussion}

The principal result of this Letter is the finite-time estimate
\begin{equation}
\frac{\operatorname{Vol}_\tau(B_\tau(p,r))}{r^3} \geq \kappa_{\mathrm{L}}(\tau) > 0, \qquad \tau < \tau_c.
\label{eq:83}
\end{equation}

The result combines two distinct consequences of the congruence kinematics.

First, Raychaudhuri comparison controls the evolution of transported
volume:
\begin{equation}
\operatorname{Vol}_\tau(F_\tau(D)) \geq e^{\int_0^\tau \Theta\, ds} \operatorname{Vol}_0(D).
\label{eq:84}
\end{equation}

Second, bounds on expansion and shear control the distortion of spatial
distances:
\begin{equation}
e^{-K\tau} d_0 \leq d_\tau \leq e^{K\tau} d_0.
\label{eq:85}
\end{equation}

Only after combining these two ingredients can one obtain a scale-invariant
lower bound for geodesic balls.

The comparison barrier has a precise but limited role. When the comparison
solution has a finite blow-up time,
\begin{equation}
\tau \to \tau_c^-
\label{eq:86}
\end{equation}
implies
\begin{equation}
\kappa_{\mathrm{L}}(\tau) \to 0
\label{eq:87}
\end{equation}
for the explicit comparison coefficient. This means that the present
lower-bound mechanism becomes degenerate at the comparison barrier. It does
not establish that the physical spatial geometry itself must collapse at
that time.

The theorem also makes clear that initial non-collapse is an essential
hypothesis. The curvature and kinematical bounds used here do not, by
themselves, generate a universal positive $\kappa_{\mathrm{L}}$ from
arbitrary initial data.

A stronger result would require an independent mechanism relating a
dynamical quantity to the initial non-collapse scale, and this remains an
open direction for future work.

The appropriate interpretation of the present result is therefore more
modest and, at the same time, more robust: Raychaudhuri focusing, together
with controlled spatial metric distortion, yields a quantitative pre-barrier
scale-invariant non-collapse estimate.

\section{Conclusion}
\label{sec:conclusion}

We have established a scale-invariant spatial non-collapse estimate for
hypersurface-orthogonal timelike geodesic congruences.

The starting point is the Raychaudhuri comparison inequality
\begin{equation}
\dot\theta \geq -\frac{1}{3}\theta^2 - A.
\label{eq:89}
\end{equation}

This yields a lower comparison solution $\Theta$ and consequently a lower
bound for the volume of domains transported along the geodesic flow.

The second ingredient is the spatial metric evolution
\begin{equation}
\mathcal{L}_u h = \frac{2}{3}\theta h + 2\sigma,
\label{eq:90}
\end{equation}
which, under uniform bounds on $\theta$ and $\sigma$, gives
\begin{equation}
e^{-K\tau} d_0 \leq d_\tau \leq e^{K\tau} d_0.
\label{eq:91}
\end{equation}

Together these estimates imply
\begin{equation}
\operatorname{Vol}_\tau(B_\tau(p,r)) \geq \kappa_0 e^{-3K\tau} e^{\int_0^\tau \Theta(s)\,ds} r^3.
\label{eq:92}
\end{equation}

Hence
\begin{equation}
\kappa_{\mathrm{L}}(\tau) > 0
\label{eq:93}
\end{equation}
throughout the finite pre-barrier regime.

The result is structurally reminiscent of non-collapse estimates in
geometric flow, but it is not an extension of Perelman's theorem: the
present evolution is generated by the Raychaudhuri equation and the
deformation of the spatial metric along a timelike congruence.

% Bibliography

\end{document}